\theoremstyle{definition}
\newtheorem{definition}{Definition}
\theoremstyle{plain}
\newtheorem{theorem}{Theorem}
\newtheorem{lem}{Lemma}
\newtheorem{prop}{Proposition}
\theoremstyle{remark}
\newtheorem{remark}{Remark}
\title{Power Management Policies for AWGN Channels with Slow-Varying Harvested Energy}
\author{Ali Zibaeenejad, \emph{IEEE Member}\\
\emph{Email}: azibaeen@ieee.org}
\begin{document}
\maketitle \thispagestyle{empty} \pagestyle{empty}

\begin{abstract}
In this paper, we study power management (PM) policies for an Energy Harvesting Additive White Gaussian Noise (EH-AWGN) channel. The arrival rate of the harvested energy is assumed to remain unchanged during each data frame (block code) and to change independently across block codes. The harvested energy sequence is known causally (online) at the transmitter.
The transmitter is equipped with a rechargeable battery with infinite energy storage capacity. The transmitter is able to adapt the allocated energy and the corresponding transmission rate of each block according to a PM policy.
Three novel online PM policies are established. The policies are universal, in the sense of the distribution of the harvested energy, and simple, in the sense of complexity, and asymptotically optimal, in the sense of maximum achievable average rates (throughput) taken over a long-term horizon of blocks.
\end{abstract}

\section{Introduction}
By 2025, Internet of Things (IoT) connected devices installed base worldwide will exceed $75$ billion devices, such as wireless sensors, tablets and smart-phones, worldwide as well as data rate up-to $10~Gb/s$ per person~\cite{statistica}. Affording continuous long-lasting energy for such devices with high data rates is a real challenge of the future Internet of Things (IoT) network. Supplying the energy of the IoT devices from green resources, such as wind, solar, and from traveling wireless signals, such as interference signals and television signals, has recently received extensive attention. Green communication recommends employing the IoT devices with energy harvesting (EH) capabilities and rechargeable batteries, because they reduce conventional
fossil energy usage which eventually produces less carbon dioxide, and they resolve the following communication concerns: energy self-sufficiency, energy self-sustainability, and ability to deploy in places with no electric power grids or at power outage occasions. Despite these benefits, the varying nature of the absorbed energy and lack of information about the status of the energy arrival in future make the design of a EH communication system a serious challenge. A power management (PM) policy is required to decide what portion of the absorbed energy is to be assigned to the current data frame and what portion of the absorbed energy is to be stored
in the battery for the future use when energy shortage is likely.
In this technology, the transmitter is able to adapt its communication data rate of each frame, according to the assigned power by the policy, to assure that a reliable communication takes place. A standard performance benchmark for a policy is the maximum \emph{average data rates} (throughput) achieved by that policy, where the average is taken over $L$ block rates.\\
\indent In seminal paper~\cite{ulukus}, Ozel and Ulukus studied
the fundamental limits of a point-to-point Energy Harvesting Additive White Gaussian Noise (EH-AWGN)  channel from a transmitter, which is equipped with an infinite size rechargeable battery, to a receiver. Two essentially different perspectives of one model is studied in this work: First, the Shannon capacity~\cite{cover} problem of the EH-AWGN; second, the THroughput Maximization (THM)~\cite[(13)]{ulukus} problem. The first problem looks for a fixed maximum achievable rate at which reliable communication is guaranteed for any block codes by using a single code-book. On the other hand, the second problem allows using an individual code-book based on an exclusive achievable rate for each block code. A PM policy manages the total available energy, including the stored energy and harvested energy, such that the average of these achievable rates (throughput) is maximized. The THM problem is useful and interesting for real situations where the energy arrival rates are (approximately) constant across symbols but they vary across blocks. The reality of this model is due to slow-varying nature of the energy resources.\\
\indent In a standard AWGN channel with no EH capability and the same power constraint on each block, both problems trivially lead to the same rate. However, the existence of the battery in an EH-AWGN channel emerges dependency between the power constraints of the blocks: spending or saving energy during a data frame impacts the available energy for its \emph{future} blocks, and thus the rates of the blocks are dependent. Also, this model is different from a standard parallel AWGN channel with $L$ paths, because the energy causality (EC) constraint~\cite[(1)]{ulukus} on the EH-AWGN channel, which states the energy can not be employed before it is harvested, makes a clear distinction: Not only the total harvested energy during each time frame is important, but also the order of the harvested energy sequence matters. This constraint makes the THM problem even more challenging when the entire arrival energy sequence is not known in advance (online case).\\
\indent The Shannon capacity problem and THM problem of an EH-AWGN channel have been extensively studied in the literature (See~\cite{ulukus, yener, zhang12, zhang14, review15, ozgur15} and the references therein). We briefly review the most related points as follows.\\
\indent The Shannon capacity of an energy harvesting AWGN channel with an infinite-size battery has been established by Ozel and Ulukus~\cite{ulukus}. They showed that the capacity of the EH-AWGN channel with average harvested energy rate $\overline{\mathcal{E}}$ is the same as that of the classical AWGN with an average power constraint equal to $\overline{\mathcal{E}}$. They have developed two remarkable coding schemes for the \emph{capacity problem}: Save-And-Transmit (SAT) and Best-Effort-Transmit (BET). These schemes manage the power allocation across symbols of block codes along with a Gaussian code-book for data transmission. Also, they studied the THM problem for a non-causal model in which the realization of the entire energy arrival sequence is known in advance. They have developed an offline Optimal PM (OPM) policy across block codes, which was originally given in~\cite{modiano} for the context of energy minimal transmission in a delay-limited scenario. Ozel et.~al.~\cite{yener} have extended this work to optimal policies for the THM problem of a fading channel with causally known channel gains. They have designed a novel offline policy based on the directional water-filling (DWF) approach (A similar approach under the name ``staircase water-filling algorithm'' have been developed in~\cite{zhang12}). These optimal offline policies~\cite{modiano, ulukus, yener, zhang12} keeps power transmission as constant as possible across blocks. The computational complexity of these policies grows (at least) linearly with increment of $L$.\\
\indent The THM problem is called online if the realization of the energy arrival sequence is only known up to the current time but not more. Authors of papers~\cite{modiano, yener, sinha12, wang13} have studied online policies in which they have modeled the recharge rate by Markov Decision Process (MDP) and solved the problem numerically by using Dynamic Programming (DP) technique without enough engineering insights for the policy structure. On the other hand, the complexity of solutions increases as $L$ grows such that they become practically infeasible (For a detailed critique on the approach of these papers, see~\cite{ozgur16}). Publications~\cite{ozgur16, ozgur17-conf, Ulukus2017-ISIT} have studied the online THM problem when the transmitter is provided with a finite size battery. They have acquired simple online policies based on the average rate $\overline{\mathcal{E}}$. However, the offered policies have generally a constant gap with the Upper Bound (UB) \emph{independent of the problem parameters}, i.e., even if the battery storage capacity is infinite.\\
\indent In this paper, we consider a point-to-point EH-AWGN channel with an infinite size battery. Practically, if the energy storage capacity of the battery is relatively much larger than the average of energy arrival distribution, the battery size can be considered to be infinite. We assume that the harvested energy rate is constant during each block, and it changes across blocks according to an independent identically distributed (i.i.d.) sequence with some \emph{arbitrary} known distribution (similar to \cite{ozgur16}). This sequence is causally known at the transmitter. We study the online THM problem and the corresponding PM policies of this model. This work is an extension of our recent paper \cite{zib-IWCIT} in which the distribution of the energy arrival sequence is Bernoulli (similar to \cite{ozgur17-conf}). The results of this paper hold for any arbitrary distribution of energy arrival sequence and thus the proposed policies are universal.\\
\indent We establish three online PM policies in this work with the following properties:
1) They all are optimal in the asymptotic sense $L\rightarrow \infty$. Hence, the derived policies of this work outperforms sub-optimal policies \cite{ozgur16, ozgur17-conf, Ulukus2017-ISIT} for case infinite battery size; 2) Their order of complexity is constant $(O(1))$ as $L$ grows; 3) They all meet the offline OPM policy~\cite{modiano, ulukus, yener}, in the asymptotic sense $L\rightarrow \infty$. Hence, the proposed online policies can be employed as optimal offline policies with complexity $O(1)$ in this asymptotic sense; 4) The proposed policies can be exploited as simple offline policies with close performance to the offline OPM~\cite{modiano, ulukus, yener} at typical finite values of $L$, as it is illustrated by simulations; 5) They can be universally utilized for any energy arrival distribution because the knowledge of the average of the energy arrival distribution ($\overline{\mathcal{E}}$) at the transmitter is sufficient for the policies; 6) The structure of the policies is fundamentally different from previously known PM policies in the literature.\\
\indent The organization of this paper is as follows: In Section~\ref{Sec:Model}, we state the problem definitions and the studied model. In Section~\ref{Sec:Results}, we establish the main results of this paper. In Section~\ref{Sec:Numerical-Results}, we present the numerical results to compare our innovative methods with major known results. In Section~\ref{Sec:Conclusion}, we finally conclude this paper.

\section{System Model} \label{Sec:Model}
Assume a point-to-point EH-AWGN channel. The transmitter (TX) affords the energy of the transmission by exogenous energy arrivals harvested from the environment. The TX is supplied with a battery with an infinite size, which enables the TX to store the harvested energy. The transmission consists of $L$ block data frames (block codes) such that each block contains $n$ symbols, where $L\gg1$ and $n$ is sufficiently large to assure that information-theoretic coding rate is achievable.\\
\indent Suppose that the harvested energy arrival rate remains constant during each block code transmission and it changes i.i.d. across block codes. The energy arrival rate (absorbed power) in block code $\ell$ in denoted by $E_\ell$ (Watts), where $1\leq \ell \leq L$, and $\{E_k\}_{k = 1}^L$ is a sequence drawn  i.i.d. according to distribution $P_E$ \emph{across} blocks. For simplicity, we assume that the duration of each \emph{symbol} is one unit time. Hence, the harvested energy during block code $\ell$ is $nE_{\ell}$ (Joules).

\begin{definition}
The EH model is called offline, if the TX knows the realization of $\{E_k\}_{k = 1}^L$ non-causally at the \emph{beginning} of the transmission. The EH problem is called online, if the TX knows the realization causally: $\{E_k\}_{k = 1}^\ell$ is available to the TX at the beginning of block code $\ell^{th}$.
\end{definition}

Let random variable $X_{\ell j}$ represent the transmission symbol $j^{th}$ in block code $\ell^{th}$, where $j \in \{1, \ldots, n\}$ and $\ell \in \{1, \ldots, L\}$.
The Energy Causality (EC) constraint is
\begin{equation}
  \sum_{k=1}^{\ell-1} \sum_{i=1}^{n} X_{k i}^2 + \sum_{i=1}^{j} X_{\ell i}^2 \leq \sum_{k=1}^{\ell-1} n E_{k} + j E_{\ell} \: \label{power-const}
\end{equation}
That is, the sent energy at each time instant does not exceed the total available energy till that time instant.
Denote the transmission power of block code $\ell$ by $Q_\ell$, where
\begin{equation}
  Q_\ell = \frac{1}{n}\sum_{i=1}^n X_{\ell i}^2 \: .
\end{equation}
Also, let $B_\ell$ be the energy stored in the battery at the beginning of block code $\ell$. Assuming initial charge $B_1 = 0$, the sequence of the battery charge is given by
\begin{equation}
  B_{\ell+1} 
             = B_\ell + n (E_\ell -  Q_\ell)\: .
\end{equation}
As depicted in Fig.~\ref{Model-pic}, if $X_{\ell i}$ is sent, the receiver detects $Y_{\ell i} = X_{\ell i} + N_{\ell i}$, where $N_{\ell i}$ is Gaussian noise with zero mean and (normalized) variance $\sigma^2 = 1$. The noise is distributed i.i.d. across symbols with the same Gaussian distribution. The transmitter is allowed to apply any power allocation across symbols or block codes as long as \eqref{power-const} is met. Also, the transmitter is permitted to code each block based on an individual code-book according to the following definition:
\begin{figure}[t]
  \centering
  \includegraphics[width=9cm]{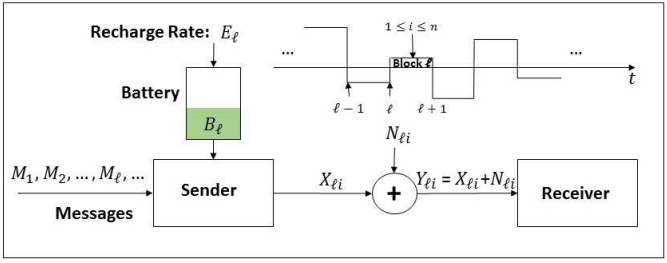}
  \caption{The AWGN energy harvesting model with slow varying energy arrivals. Sequence $\{E_\ell\}_{\ell = 1}^L$ is known causally at the transmitter.}\label{Model-pic}
\end{figure}
\begin{definition}
Let $R_\ell$ be the information rate in block code $\ell^{th}$. That is, message $M_\ell \in \{1, \ldots, 2^{nR_\ell}\}$ is to be sent in this block, where $\{M_\ell\}_{\ell = 1}^{L}$ is an i.i.d. sequence with a uniform distribution.
An admissible coding scheme for block code $\ell$ consists of an encoder, a decoder, and a code-book $\mathcal{C}_\ell^{(n)}$. The TX selects encoder $\mathcal{F}^{(\ell)}$ to sent block of symbols $X_{\ell 1}^n = \{X_{\ell i}\}_{i=1}^n$ by $X_{\ell 1}^n = \mathcal{F}^{(\ell)}(M_\ell, B_{\ell - 1}, E_{\ell})$ subject to~\eqref{power-const}. The decoder selects decoder $\mathcal{D}^{(\ell)}$ to decode the received sequence $Y_{\ell 1}^n = \{Y_{\ell i}\}_{i=1}^n$ at the end of block $\ell$, i.e., $\hat{M}_\ell = \mathcal{D}^{(\ell)}(Y_{\ell 1}^n)$, where $\hat{M}_\ell \in \{1, \ldots, 2^{nR_\ell}\}$. The corresponding (average) probability of error for block code $\ell$ is defined as
\begin{equation*}
  P_e^{(n)}(\ell) = Pr\{\hat{M}_\ell \neq M_\ell\}\:.
\end{equation*}
Rate $R_\ell$ is called achievable if there exists an admissible coding scheme for block code $\ell$ with $\lim_{n\rightarrow\infty} P_e^{(n)}(\ell) = 0$.
\end{definition}
\begin{remark}
The TX encodes each block code according to an individual code-book and information rate. At the beginning of block $\ell$, the TX sends a pilot sequence with power $Q_\ell$ and with some length $n_p \ll n$ to assist the receiver for estimation of $Q_\ell$. From this estimation, the information rate $R_\ell = \frac{1}{2}\log(1+Q_\ell)$ can be derived (See Lemma~\ref{Lem:block}). The estimation error is ignored in this paper. Once $R_\ell$ is calculated by the receiver, the receiver is able to utilize the corresponding code-book designed for rate $R_\ell$ for decoding the remaining $n-n_p \approx n$ information bits.
\end{remark}
\begin{remark}
The decoder decodes message $\hat{M}_\ell = \mathcal{D}^{(\ell)}(Y_{\ell 1}^n)$ after receiving all symbols of the corresponding block $\ell$ without waiting for arrival of future blocks. This is an important difference between this model and the first problem (capacity of the EH-AWGN channel) in \cite{ulukus}: No decision is made after each symbol transmission in \cite{ulukus}. 
\end{remark}
\indent The coding scheme for transmission of the $L-$block code contains a collection of $L$ admissible coding schemes. \\
\indent A power assignment $\{Q_\ell\}_{\ell = 1}^L$ allocated to the block codes is called a power management policy. The \emph{performance} of a policy is measured by (average) throughput
\begin{equation}
  \overline{R_L} = \frac{1}{L}\sum_{\ell = 1}^{L} R_{\ell} \: \label{throughput}
\end{equation}
for a horizon of $L$ data frames. Also in asymptotic case $L\rightarrow \infty$, the asymptotic throughput is defined as
\begin{equation}
  \overline{R_\infty} = \sup_{L\rightarrow \infty} \frac{1}{L}\sum_{\ell = 1}^{L} R_{\ell}\:. \label{throughput-infinit}
\end{equation}

\section{Main Results}   \label{Sec:Results}
Assume that a block code $\ell$ is to be sent from the TX. The following lemma establishes the maximum achievable rate $R_\ell$ in this block code based on the total available energy.

\begin{lem} \label{Lem:block}
Let $B_\ell$ and $E_\ell$ are given and fixed at the beginning of block code $\ell$. Then, any rate
\begin{equation}
  R_\ell = \frac{1}{2}\log(1+ Q_\ell)\: \label{R_l-Lem}
\end{equation}
is an achievable rate for block code $\ell$, where
\begin{equation*}
  Q_\ell \leq E_\ell + \frac{B_\ell}{n}
\end{equation*}
\end{lem}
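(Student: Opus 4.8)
\emph{Proof proposal.} The plan is to read this statement as the achievability half of the single-block AWGN coding theorem, specialized to the energy-harvesting budget, and to adapt the Save-And-Transmit (SAT) construction of~\cite{ulukus}. The target rate $\frac{1}{2}\log(1+Q_\ell)$ is precisely the AWGN capacity at average power $Q_\ell$, so the only real issue is to exhibit a code whose symbols respect the energy-causality constraint~\eqref{power-const} while spending an average power arbitrarily close to $Q_\ell$. A first useful observation is that a \emph{deterministic} constant-power allocation already satisfies~\eqref{power-const}: sending energy $Q_\ell$ in every symbol makes the cumulative expenditure at time $j$ equal to $jQ_\ell$, and since the hypothesis $Q_\ell\le E_\ell+B_\ell/n$ gives $n(Q_\ell-E_\ell)\le B_\ell$, we have $j(Q_\ell-E_\ell)\le B_\ell$ for every $j\le n$ (the inequality being trivial when $Q_\ell\le E_\ell$), so the available budget $B_\ell+jE_\ell$ is never exceeded. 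Hence the entire difficulty is that a Gaussian codebook spends a \emph{random} amount of energy per symbol, and the partial sums $\sum_{i\le j}X_{\ell i}^2$ may temporarily overshoot their mean.

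To control these fluctuations I would use SAT with a sub-linear save phase. First I would reserve an initial block of $n_0=o(n)$ symbols (for instance $n_0=n^{2/3}$) during which $X_{\ell i}=0$; this idles the channel but banks an energy buffer of size $B_\ell+n_0E_\ell$. Over the remaining $n-n_0$ symbols I would transmit an i.i.d. Gaussian codebook with per-symbol power $P=nQ_\ell/(n-n_0)$, so that the \emph{expected} total expenditure $nQ_\ell$ does not exceed the budget $B_\ell+nE_\ell$. The standard AWGN channel-coding theorem then makes rate $\tfrac12\log(1+P)$ achievable on the transmit phase with vanishing error probability, and since only a fraction $n_0/n\to0$ of symbols are idled, the per-channel-use rate is $\frac{n-n_0}{n}\cdot\tfrac12\log(1+P)\to\tfrac12\log(1+Q_\ell)$ as $n\to\infty$. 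A vanishing power back-off $P\mapsto(1-\eta)P$ can be inserted to leave strictly positive terminal slack without affecting this limit once $\eta\to0$ is taken after $n\to\infty$.

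The hard part will be verifying that~\eqref{power-const} holds with probability tending to one for the random codeword, \emph{uniformly} over all $j$. Here I would write the cumulative transmit energy as a sum of independent $\chi^2$ terms, whose deviations from the affine mean profile are of order $\sqrt{n}$ (controlled rigorously by a Bernstein/Chernoff bound for sub-exponential variables). The instantaneous slack between the mean expenditure and the budget is an affine function of $j$ that equals the banked buffer $B_\ell+n_0E_\ell$ at the start of the transmit phase and the terminal reserve $\eta(n-n_0)P$ at its end, hence is nonnegative throughout. Choosing $n_0=\omega(\sqrt{n})$ makes the buffer dominate every partial-sum excursion near the start, while the fixed back-off makes $\eta(n-n_0)P=\Theta(\eta n)$ dominate the $O(\sqrt{n})$ fluctuation near the end, so a union bound over the $n$ time instants drives the probability of any energy-causality violation to zero. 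Combining the vanishing decoding-error probability, the vanishing violation probability, and the vanishing rate loss yields a sequence of admissible codes establishing achievability of $\tfrac12\log(1+Q_\ell)$; since increasing $B_\ell$ only enlarges the slack, the same construction covers every admissible $Q_\ell\le E_\ell+B_\ell/n$, completing the argument.
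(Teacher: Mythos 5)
Your proposal is correct and follows essentially the same route as the paper: the paper's proof of Lemma~\ref{Lem:block} simply invokes the save-and-transmit achievability argument of~\cite{ulukus} (with details deferred to~\cite{zib-IWCIT}), and your sketch is a faithful, fleshed-out reconstruction of exactly that SAT-across-symbols scheme, with the sub-linear save phase, the affine slack analysis, and the sub-exponential concentration bound supplying the ``extensions and modifications'' the paper leaves implicit.
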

\begin{proof}
The proof is follows from \cite{ulukus} by some extensions and modifications.  Specific details can be found in~\cite{zib-IWCIT}.
\end{proof}
\begin{remark}
Note that $B_{\ell}$ can grow to infinity with order $n$. This situation happens when a portion of harvested energy rate in previous blocks does not consumed up. So, $\lim_{n\rightarrow \infty} \frac{B_\ell}{n} \neq 0$ in general except $\ell = 1$.
\end{remark}

\indent The main contribution of this paper is the lower bounds on $\overline{R_L}$. However, we first express the upper bound (UB) on $\overline{R_L}$ based on work~\cite{ulukus} to assess the performance of the lower bounds.
\begin{prop}\label{Prop:Upper-Bound}
An upper-bound on $\overline{R_L}$ is given by
\begin{equation}
  \overline{R_L} \leq \frac{1}{2}\log(1+ \frac{1}{L}\sum_{\ell = 1}^L E_{\ell})\: . \label{UB}
\end{equation}
\end{prop}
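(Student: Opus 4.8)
The plan is to combine the per-block rate formula from Lemma~\ref{Lem:block} with the concavity of the logarithm and the energy causality constraint~\eqref{power-const}. First I would substitute the achievable rate $R_\ell = \frac{1}{2}\log(1+Q_\ell)$ into the throughput definition~\eqref{throughput}, giving
\[
\overline{R_L} = \frac{1}{L}\sum_{\ell=1}^{L}\tfrac{1}{2}\log(1+Q_\ell).
\]
Since $Q_\ell = \frac{1}{n}\sum_{i=1}^{n}X_{\ell i}^2 \geq 0$ and the map $x\mapsto \frac{1}{2}\log(1+x)$ is concave on $[0,\infty)$, Jensen's inequality applied to the empirical average over the $L$ blocks yields
\[
\overline{R_L} \leq \tfrac{1}{2}\log\Bigl(1+\tfrac{1}{L}\sum_{\ell=1}^{L}Q_\ell\Bigr).
\]

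Next I would bound the average transmission power by the average harvested power using energy causality. Evaluating~\eqref{power-const} at $\ell=L$ and $j=n$ and recognizing that $\sum_{i=1}^{n}X_{\ell i}^2 = nQ_\ell$ gives $\sum_{\ell=1}^{L} nQ_\ell \leq \sum_{\ell=1}^{L} nE_\ell$, i.e., the total energy spent over the horizon never exceeds the total energy harvested. Equivalently, this is the statement that the terminal battery level $B_{L+1}=n\sum_{\ell=1}^{L}(E_\ell-Q_\ell)$ is nonnegative. Dividing by $nL$ produces $\frac{1}{L}\sum_{\ell=1}^{L}Q_\ell \leq \frac{1}{L}\sum_{\ell=1}^{L}E_\ell$. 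Because $\log$ is monotonically increasing, substituting this into the previous display delivers the claimed bound~\eqref{UB}.

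The argument is a two-stage relaxation and I do not expect a genuine obstacle; the only points requiring care are verifying the correct direction of Jensen's inequality and confirming that the averaged inequality $\frac{1}{L}\sum Q_\ell \leq \frac{1}{L}\sum E_\ell$ is the right aggregate form of the per-symbol causality constraint~\eqref{power-const}. It is worth noting that equality in the Jensen step holds exactly when all the $Q_\ell$ are equal, so the upper bound is approached by an equalizing constant-power allocation, which is consistent with the remark that the offline optimal policy keeps transmission power as constant as possible across blocks.
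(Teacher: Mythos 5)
Your proof is correct, and its core---Jensen's inequality applied to the concave map $x\mapsto\tfrac12\log(1+x)$ followed by the aggregate energy bound $\sum_\ell Q_\ell\le\sum_\ell E_\ell$---is exactly the derivation that the paper outsources to \cite[(20)]{ulukus}. The difference is structural: the paper's proof is a two-stage relaxation (first invoke the offline/non-causal bound from the reference, then argue that causal knowledge cannot outperform non-causal knowledge), whereas your argument is self-contained and, notably, makes the causal/non-causal step unnecessary, since both Jensen's inequality and the terminal-battery inequality $B_{L+1}=n\sum_{\ell=1}^{L}(E_\ell-Q_\ell)\ge 0$ hold \emph{pathwise} for any realized power sequence, online or offline. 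What the paper's route buys is brevity and an explicit record of the intuition that offline dominates online; what yours buys is a single direct argument covering both cases at once. One point of precision you should fix: you substitute $R_\ell=\tfrac12\log(1+Q_\ell)$, but Lemma~\ref{Lem:block} only asserts achievability (a lower bound); for the upper bound you need the converse direction $R_\ell\le\tfrac12\log(1+Q_\ell)$ for any reliable length-$n$ code of power $Q_\ell$, which is the standard AWGN converse and should be invoked explicitly rather than read off from the achievability lemma. With that one-line repair your argument is complete, and your closing observation---that equality in the Jensen step forces equal $Q_\ell$, matching the constant-power structure of the offline OPM policy---is a correct and worthwhile remark that the paper's citation-based proof leaves implicit.
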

\begin{proof}
First, assume that the TX has access to the harvested energy sequence $E_{\ell}$ non-causally. Then, \eqref{UB} is derived in \cite[(20)]{ulukus} based on Jensen's inequality \cite{cover}. Second, knowing the sequence $E_{\ell}$ non-causally provides an advantage to the TX generally with respect to the case causal (online) knowledge. Hence, any online policy can achieve a throughput which can not exceed the upper bound on $\overline{R_L}$ of a similar non-causal case.
\end{proof}
\begin{remark}
This UB is generally loose for both the online model and the offline model, because no EC constraint~\eqref{power-const} is taken into account to derive the UB except the power constraint~\eqref{power-const} on the whole block code.
\end{remark}

In this paper, we propose three novel power assignment policies (lower bounds on $\overline{R_L}$) in the following three subsections.

\subsection{Save-And-Transmit (SAT) Across Blocks}
This policy consists of two phases: Save phase And Transmission (SAT) phase. This method is an extension of the SAT across symbols~\cite{ulukus} to address the constraints of the model of this paper.
In the first phase, the harvested energy is saved during $\hbar(L)$ block codes, where $\hbar(L)$ is a function of $L$, and no transmission occurs. In the second phase, transmission takes place with constant power $\overline{\mathcal{E}}$. $\hbar(L)$ is selected such that the battery accumulates sufficient energy $n(L - \hbar(L))\overline{\mathcal{E}}$ during the Save phase such that the TX is able to transmit $L - \hbar(L)$ block codes with constant energy $n\overline{\mathcal{E}}$ each.
In fact, the rates are assigned to the block codes as follows:
\begin{eqnarray}\label{SAT-Rl}
 R_\ell = \left\{
                        \begin{array}{ll}
                            0, & \ell \leq \hbar(L); \\
                            \frac{1}{2}\log(1+P), & \hbar(L) < \ell \leq L.
                        \end{array}
            \right.
\end{eqnarray}
where $P= \overline{\mathcal{E}}-\epsilon$ for any $\epsilon>0$. According to \eqref{SAT-Rl}, the throughput \eqref{throughput} is given by
\begin{equation}
  \overline{R_L} = \frac{L-\hbar(L)}{2L}\log(1+P)\:. \label{SAT-throughput}
\end{equation}
\begin{lem} \label{Lem:SAT}
 Assume that the order of $\hbar(L)$ is smaller than $L$ and $\lim_{L\rightarrow \infty} \hbar(L) = \infty$. Then, the SAT across blocks policy satisfies the power constraint~\eqref{power-const} with high probability (close to one) if $P < \bar{\mathcal{E}}$. Indeed, the SAT across block policy optimally achieves
\begin{equation*}
  \overline{R_\infty} = \frac{1}{2}\log(1 + P)\: .
\end{equation*}
\end{lem}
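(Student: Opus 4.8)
The plan is to prove the two assertions separately: that the energy-causality (EC) constraint \eqref{power-const} is met with probability tending to one, which is the substantive part, and that the throughput converges to $\frac{1}{2}\log(1+P)$, which is routine.

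First I would recast feasibility as a non-negativity statement for the battery. Under \eqref{SAT-Rl} we have $Q_\ell = 0$ for $\ell \le \hbar(L)$ and $Q_\ell = P$ afterwards, so the recursion $B_{\ell+1} = B_\ell + n(E_\ell - Q_\ell)$ with $B_1 = 0$ yields, for every transmission block,
\[
  B_{\hbar(L)+1+m} = A + n\,W_m, \qquad A := n\sum_{k=1}^{\hbar(L)} E_k, \quad W_m := \sum_{j=1}^{m}\bigl(E_{\hbar(L)+j} - P\bigr).
\]
Because the battery level is affine in the symbol index within any block, its per-block minimum is attained at an endpoint, so \eqref{power-const} holds over the whole horizon iff $B_\ell \ge 0$ for all $\ell$, i.e. iff $A \ge n\,M$ where $M := \max_{0 \le m \le L - \hbar(L)} (-W_m) \ge 0$. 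Here $A$ is the energy banked during the Save phase and $W_m$ is the net energy surplus after $m$ transmission blocks.

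The crux is to show $\Pr[A \ge nM] \to 1$. The decisive structural fact is that $A$ depends only on $E_1,\dots,E_{\hbar(L)}$ while $M$ depends only on $E_{\hbar(L)+1},\dots,E_L$, so by the i.i.d. hypothesis $A$ and $M$ are independent. The increments of $W_m$ have mean $\overline{\mathcal{E}} - P = \epsilon > 0$---this is exactly where $P < \overline{\mathcal{E}}$ is used---so $W_m$ is a positive-drift random walk and $W_m \to +\infty$ a.s. by the SLLN; consequently $M_\infty := \sup_{m \ge 0}(-W_m)$ is finite a.s. Since $M \le M_\infty$ pointwise for every horizon, $M$ is tight uniformly in $L$: given $\delta>0$ choose $K$ with $\Pr[nM > K] \le \Pr[nM_\infty > K] < \delta$. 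Meanwhile $\hbar(L) \to \infty$ forces $A/\hbar(L) \to n\overline{\mathcal{E}} > 0$ in probability (WLLN), hence $\Pr[A \le K] \to 0$ for fixed $K$. Using $\{A < nM\} \subseteq \{A < K\} \cup \{nM > K\}$ gives $\limsup_L \Pr[A < nM] \le \delta$, and letting $\delta \downarrow 0$ establishes feasibility with high probability. Note that only a finite first moment of $P_E$ is invoked, which is what makes the policy universal.

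For the throughput, substituting \eqref{SAT-Rl} gives \eqref{SAT-throughput}, and since $\hbar(L)$ has order smaller than $L$ we get $\frac{L-\hbar(L)}{L} \to 1$, so $\overline{R_L} \to \frac{1}{2}\log(1+P)$ and thus $\overline{R_\infty} = \frac{1}{2}\log(1+P)$. Optimality then follows from Proposition~\ref{Prop:Upper-Bound}: as $\frac{1}{L}\sum_{\ell} E_\ell \to \overline{\mathcal{E}}$ a.s., the upper bound tends to $\frac{1}{2}\log(1+\overline{\mathcal{E}})$, which the achieved rate $\frac{1}{2}\log(1+\overline{\mathcal{E}}-\epsilon)$ approaches as $\epsilon \downarrow 0$. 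I expect the main obstacle to be the feasibility step, and specifically the need to control \eqref{power-const} simultaneously over all transmission blocks rather than only at the terminal one: an early downward excursion of $W_m$ can violate the EC constraint even when $B_{L+1} > 0$. The independent decomposition into the growing buffer $A$ and the a.s.-finite deficit $\sup_m(-W_m)$ is what resolves this, and it is also where the two hypotheses on $\hbar(L)$ separately enter---$\hbar(L)\to\infty$ to drive $A\to\infty$, and $\hbar(L)=o(L)$ to make the rate loss vanish.
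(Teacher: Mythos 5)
Your proof is correct, but it takes a genuinely different route from the paper's. The paper argues by an epoch-by-epoch induction: the Save phase banks roughly $n\hbar(L)\overline{\mathcal{E}}$ (SLLN with error $\delta_1$), which funds approximately the next $\hbar(L)$ blocks at power $P=\overline{\mathcal{E}}-\epsilon$; meanwhile fresh harvests of roughly $n\hbar(L)\overline{\mathcal{E}}-\delta_2$ accumulate to fund the following epoch, and the iteration is claimed to continue without outage. You instead make a single global decomposition $B_{\hbar(L)+1+m}=A+nW_m$ with an independent buffer $A$ and a positive-drift random walk $W_m$, reduce the EC constraint \eqref{power-const} to battery non-negativity at block boundaries (correctly, via affineness in the symbol index), and control the \emph{maximal} deficit $M\leq M_\infty=\sup_{m\geq 0}(-W_m)$, which is a.s.\ finite precisely because $P<\overline{\mathcal{E}}$; tightness of $M$ uniformly in $L$ plus $A\to\infty$ then gives feasibility with high probability. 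Your argument is arguably tighter than the paper's sketch, which does not make explicit that the per-epoch SLLN error terms $\delta_1,\delta_2,\ldots$ must be controlled uniformly across the unboundedly many epochs, whereas your maximal-deficit bound handles all transmission blocks simultaneously and makes transparent that only a finite first moment of $P_E$ is needed (the universality claim). Two cosmetic remarks: the independence of $A$ and $M$, while true, is never actually used---your union bound $\{A<nM\}\subseteq\{A<K\}\cup\{nM>K\}$ needs only tightness of $M$ and divergence of $A$; and since both $A$ and $nM$ scale with $n$, it is cleaner to state the comparison as $\Pr\bigl[\frac{A}{n}\geq M\bigr]\to 1$ so that the choice of $K$ is manifestly independent of the block length. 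The throughput and optimality steps match the paper's, via \eqref{SAT-throughput} with $\hbar(L)=o(L)$ and Proposition~\ref{Prop:Upper-Bound}.
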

\begin{proof}
The sketch of the proof is as follows. In the first $\hbar(L)$ blocks, no transmission occurs and the battery collects energy
\begin{eqnarray}
  B_{\hbar(L)} &=& \sum_{k=1}^{\hbar(L)} n E_k \nonumber \\
               &\geq&  n \hbar(L) \overline{\mathcal{E}} - \delta_1 \label{aaa}
\end{eqnarray}
provided $\lim_{L\rightarrow \infty} \hbar(L) = \infty$ due to Strong Law of Large Numbers (SLLN)~\cite{cover}. Now, assume that the battery uses only the stored energy in the battery to transmit at least blocks $\hbar(L)+1$ to $2\hbar(L) - \frac{\delta_1}{n\overline{\mathcal{E}}}$ with power $\overline{\mathcal{E}}-\epsilon$ each. During this period, the stored energy of the first $\hbar(L)$ blocks are completely consumed, but the battery collects new energy $n \hbar(L) \overline{\mathcal{E}} - \delta_2$ similar to \eqref{aaa}. Again, this energy can afford the transmission up to at least block $3\hbar(L) - \frac{\delta_2}{n\overline{\mathcal{E}}}$ with power $\overline{\mathcal{E}}-\epsilon$ each. This iteration can happen as long as all blocks after block $\hbar(L)$  are sent with no energy outage. Hence, the transmission phase requirements are met. The lemma is concluded from \eqref{SAT-throughput} because only the first $\hbar(L)$ blocks are not sent.
\end{proof}

\subsection{Best-Effort-Transmit (BET) Across Blocks}
In this policy, the TX does best effort to transmit a block code with a given constant power $P < \overline{\mathcal{E}}$. If sufficient energy is available for the whole block transmission, that block will be sent. Otherwise, the TX does not send the block and stores energy for future.
This policy is an extension of the BET across symbol~\cite{ulukus} to address the constraints of the model of this paper. Specifically, a Gaussian code-book $\mathcal{C}$ is generated according to $\mathcal{N}(0, P)$. Block code $\ell$ is sent only if $B_\ell + n E_\ell \geq n P$. If block code $\ell$ is sent,
\begin{eqnarray*}
      Q_\ell &=& P \\
      B_{\ell + 1} &=& B_\ell + n E_\ell - nP
\end{eqnarray*}
according to Lemma~\ref{Lem:block}. Otherwise, the block code does not send and $B_{\ell + 1} = B_\ell + n E_\ell$. The following lemma approves that almost all block codes are sent.

\begin{lem}   \label{Lem:BET}
In the BET across blocks policy, if $P < \bar{\mathcal{E}}$ and $L\rightarrow \infty$, the scheme optimally achieves
\begin{equation*}
  \overline{R_\infty} = \frac{1}{2}\log(1+ P) \: .
\end{equation*}
\end{lem}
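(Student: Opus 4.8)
The plan is to reduce the throughput claim to a counting statement about the number of blocks that fail to transmit, and then to control that count by dominating the battery process with a random walk of strictly positive drift. First I would note that, by construction, every transmitted block carries rate exactly $\frac{1}{2}\log(1+P)$ while every skipped block carries rate $0$; hence, writing $N_s$ for the number of sent blocks among the first $L$, the throughput is
\begin{equation*}
  \overline{R_L} = \frac{N_s}{L}\,\frac{1}{2}\log(1+P).
\end{equation*}
Thus it suffices to show that the fraction of \emph{skipped} blocks, $(L-N_s)/L$, tends to $0$ as $L\to\infty$. The claimed optimality then follows by letting $P\uparrow\overline{\mathcal{E}}$, so that $\frac{1}{2}\log(1+P)$ approaches the asymptotic upper bound $\frac{1}{2}\log(1+\overline{\mathcal{E}})$ implied by Proposition~\ref{Prop:Upper-Bound}, exactly as in Lemma~\ref{Lem:SAT}.

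Next I would work with the normalized battery $\beta_\ell := B_\ell/n$, whose update reads $\beta_{\ell+1}=\beta_\ell+E_\ell-P$ when $\beta_\ell+E_\ell\ge P$ (block sent) and $\beta_{\ell+1}=\beta_\ell+E_\ell$ otherwise (block skipped), with $\beta_1=0$. The key device is the comparison random walk
\begin{equation*}
  S_\ell := \sum_{k=1}^{\ell-1}(E_k-P), \qquad S_1 = 0,
\end{equation*}
which subtracts $P$ at every step regardless of feasibility. A short induction shows $\beta_\ell\ge S_\ell$ for all $\ell$: in a sent block the two recursions advance identically, while in a skipped block $\beta$ omits the $-P$ term and so can only widen the gap. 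Because $\{E_\ell\}$ is i.i.d.\ with mean $\overline{\mathcal{E}}>P$, the increments of $S_\ell$ have strictly positive mean $\overline{\mathcal{E}}-P$, so by the SLLN $S_\ell/\ell\to\overline{\mathcal{E}}-P>0$ and hence $S_\ell\to+\infty$ almost surely.

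I would then close the argument as follows. Since $S_\ell\to\infty$, there is an almost-surely finite random index $L_0$ with $S_\ell\ge P$ for every $\ell\ge L_0$; for such $\ell$ the domination gives $\beta_\ell\ge S_\ell\ge P$, whence $\beta_\ell+E_\ell\ge\beta_\ell\ge P$ (using $E_\ell\ge 0$) and block $\ell$ is necessarily sent. Consequently every skipped block has index below $L_0$, so $L-N_s\le L_0-1$ is bounded by a quantity \emph{independent of} $L$. Dividing by $L$ and letting $L\to\infty$ gives $(L-N_s)/L\to 0$ almost surely, and therefore $\overline{R_L}\to\frac{1}{2}\log(1+P)=\overline{R_\infty}$ in the sense of~\eqref{throughput-infinit}, as required.

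The step I expect to be the main obstacle is the domination $\beta_\ell\ge S_\ell$ together with its passage to a uniform-in-$L$ bound on the number of skipped blocks. The battery recursion is not a standard reflected walk—it withholds the drain $-P$ precisely on the skipped steps—so one must verify that this withholding can only help, and then exploit the strict positivity of the drift $\overline{\mathcal{E}}-P$ to guarantee that $S_\ell$ stays above the threshold $P$ permanently after a finite time. Some care is also needed to state the conclusion in the correct mode of convergence (almost sure, inherited from the SLLN) so that it matches the $L\to\infty$ throughput in~\eqref{throughput-infinit}.
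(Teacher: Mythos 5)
Your proof is correct, and it takes a genuinely different route from the paper's. The paper argues by contradiction on the growth order of the index $k_0$ of the last skipped block: it splits into the case $O(k_0)<O(L)$ (where the skipped blocks are negligible in the average) and the case $O(k_0)\geq O(L)$, and in the latter it lower-bounds $B_L+nE_L$ by $\sum_\ell nE_\ell - n(L-k_0)P$, invokes the SLLN with an explicit error term $\delta_L$ (whose order times $L$ is strictly below $L$), and concludes $B_L+nE_L\geq nP$, contradicting the assumption that block $L$ is skipped. Your pathwise domination $\beta_\ell\geq S_\ell$ by the unreflected walk $S_\ell=\sum_{k=1}^{\ell-1}(E_k-P)$ — which your induction verifies correctly, since skipping only omits a $-P$ drain — proves something strictly stronger: since $S_\ell\to\infty$ almost surely by the SLLN with drift $\overline{\mathcal{E}}-P>0$, only \emph{finitely many} blocks are ever skipped (all with index below the a.s.\ finite random time $L_0$), rather than merely $o(L)$ of the first $L$. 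This buys you two things the paper's argument lacks: it avoids the delicate bookkeeping with $\delta_L$ and growth orders of $k_0$, and it directly \emph{establishes} the existence of the threshold beyond which every block is sent, whereas the paper's first step simply assumes such a $k_0$ with $O(k_0)<O(L)$ exists and then rules out the complementary case — a dichotomy that is less transparent than your direct construction. Your closing remarks on the mode of convergence are also apt: the bound $L-N_s\leq L_0-1$ with $L_0$ independent of $L$ (the energy sequence living on one probability space) gives $\overline{R_L}\to\frac{1}{2}\log(1+P)$ almost surely, which matches~\eqref{throughput-infinit}; the only implicit hypothesis you should flag is $\overline{\mathcal{E}}<\infty$ for the SLLN, which the model tacitly assumes. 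The final optimality step via $P\uparrow\overline{\mathcal{E}}$ and Proposition~\ref{Prop:Upper-Bound} is exactly how the paper's Theorem~1 concludes.
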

\begin{proof}
First, assume there exists a $k_0$ beyond which all block codes are sent with power $P$ such that $k_0$ is a function of $L$ with some increasing order less than $O(L)$. Ignoring the first $k_0$ blocks in~\eqref{throughput-infinit}, we obtain the following lower bound
\begin{equation}\label{BET:first_assump}
  \overline{R_\infty} \geq \lim_{L\rightarrow \infty} \frac{L - k_0}{2L}\log(1 + P) = \frac{1}{2}\log(1 + P)
\end{equation}
Hence, the contribution of the first $k_0$ blocks in the average throughput is negligible as $L$ grows.
Second, assume there exists some $k_0$ with increasing order $L$ or higher as a function of $L$, such that $L^{th}$ block code is the $k_0^{th}$ block code with transmission power $Q_L = 0$. In other words, $L - k_0$ blocks before $L^{th}$ block are sent by transmission power $P$ and $k_0-1$ of them as well as $L^{th}$ block are not sent. If we deduct the consumed energy from the total stored energy during $L$ blocks, the following lower bound on $B_{L} + nE_L$ is obtained.
\begin{eqnarray}
  B_{L} + nE_L &=& \sum_{\ell=1}^{L-1} nE_\ell - n(L - k_0)P + n E_L \label{BET:proof1}\\
   &>& nL(\bar{\mathcal{E}} - \delta_L) - n(L - k_0)\bar{\mathcal{E}} \label{BET:proof2} \\
   &=& -nL \delta_L + nk_0\bar{\mathcal{E}} \nonumber\\
   &\geq& nP            \label{BET:proof3}
\end{eqnarray}
where \eqref{BET:proof1} follows from the fact that $L-k_0$ blocks are sent by power $P$; \eqref{BET:proof2} follows from $P< \bar{\mathcal{E}}$ and from Strong Law of Large Numbers (SLLN)~\cite{cover}; \eqref{BET:proof3} holds for \emph{any} $k_0 \geq \frac{L \delta_L + P}{\bar{\mathcal{E}}}$. The order of $L\delta_L$ is strictly less than $L$~\cite{cover}, and thus we conclude that \eqref{BET:proof3} is met for any $k_0$ with order $L$ or higher as $L\rightarrow \infty$. According to \eqref{BET:proof3}, the battery has enough energy to afford transmission in the $L^{th}$ block code. This result violates the initial assumption $O(k_0)\geq L$. The lemma is concluded from the first assumption.
\end{proof}
\begin{remark}
The SAT across blocks and the BET across blocks are extended versions of the SAT across symbols and the BET across symbols which have been developed by \cite{ulukus} for the capacity problem. In this paper, these extended versions are employed for the standard online throughput maximization problem though. Indeed, the corresponding proofs are completely novel, because extending the proofs of \cite{ulukus} to this work is not trivial because $\{Q_\ell\}_{\ell=1}^L$ is not an i.i.d. sequence.
\end{remark}
\subsection{Adaptive Power Allocation (APA) Across Blocks}
In this policy, the TX adaptively allocate energy arrivals to the block codes. Set a constant power $P < \overline{\mathcal{E}}$. For any block code, if the total available energy (including the stored energy in the battery and the harvested energy during the block transmission) is sufficient to afford the block transmission with power $P$, then energy $nP$ is allocated to that block code and the extra energy remains in the battery for future usage. Otherwise, the total available energy is allocated to that block. Specifically, if $B_\ell + n E_\ell \geq n P$, then
\begin{eqnarray*}
      Q_\ell &=& P \\
      B_{\ell + 1} &=& B_\ell + n E_\ell - nP \:,
\end{eqnarray*}
and the block code is called a perfect block code. If $B_\ell + n E_\ell < n P$, then
\begin{eqnarray}
      Q_\ell &=& \frac{B_\ell}{n} + E_\ell \\
      B_{\ell + 1} &=& 0 \:,   \label{APA:Battery2}
\end{eqnarray}
and the block code is called imperfect.
\begin{lem}   \label{Lem:APA}
In the APA across blocks policy, if $P < \bar{\mathcal{E}}$ and $L\rightarrow \infty$, the scheme optimally achieves
\begin{equation*}
  \overline{R_\infty} = \frac{1}{2}\log(1 + P)\: .
\end{equation*}
\end{lem}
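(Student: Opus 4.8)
The plan is to sandwich $\overline{R_\infty}$ between $\frac{1}{2}\log(1+P)$ from both sides. The upper bound is immediate from the construction: on every perfect block $Q_\ell = P$, while on an imperfect block the defining inequality $B_\ell + nE_\ell < nP$ forces $Q_\ell = \frac{B_\ell}{n} + E_\ell < P$. Hence $Q_\ell \le P$ for all $\ell$, and by Lemma~\ref{Lem:block} we get $R_\ell = \frac{1}{2}\log(1+Q_\ell) \le \frac{1}{2}\log(1+P)$, so that $\overline{R_L} \le \frac{1}{2}\log(1+P)$ for every $L$. This trivial bound, read together with Proposition~\ref{Prop:Upper-Bound} and letting $P \uparrow \bar{\mathcal{E}}$, is also what ties the policy to the optimal target $\frac{1}{2}\log(1+\bar{\mathcal{E}})$.

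For the matching lower bound I would reduce everything to counting imperfect blocks. Writing $b_\ell := B_\ell/n$ for the normalized battery level, the APA update is exactly the reflected (Lindley-type) recursion $b_{\ell+1} = (b_\ell + E_\ell - P)^+$, and block $\ell$ is imperfect precisely when $b_\ell + E_\ell - P < 0$, i.e. when $b_{\ell+1} = 0$. Since imperfect blocks contribute $R_\ell \ge 0$ and perfect blocks contribute $R_\ell = \frac{1}{2}\log(1+P)$, letting $N_i(L)$ denote the number of imperfect blocks among the first $L$ gives $\overline{R_L} \ge \frac{L - N_i(L)}{L}\cdot\frac{1}{2}\log(1+P)$. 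It therefore suffices to show $N_i(L) = o(L)$, and I claim in fact $N_i(L)$ stays bounded as $L \to \infty$.

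The hard part, and the main obstacle, is proving that imperfect blocks occur only finitely often. Here I would solve the recursion explicitly: setting $S_m := \sum_{j=1}^{m}(E_j - P)$ with $S_0 := 0$, a short induction yields $b_{\ell+1} = S_\ell - \min_{0 \le k \le \ell} S_k$. Because $\{E_j\}$ is i.i.d. with mean $\bar{\mathcal{E}} > P$, the SLLN gives $S_m/m \to \bar{\mathcal{E}} - P > 0$, so $S_m \to +\infty$ almost surely; consequently the running minimum $\min_{0 \le k \le \ell} S_k$ is eventually constant, equal to the finite global minimum $\min_{k \ge 0} S_k$. Hence $b_\ell \to +\infty$ a.s., and since $E_\ell \ge 0$ this forces $b_\ell + E_\ell > P$ for all $\ell$ beyond some random but a.s. finite index $\ell_0$. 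Therefore no imperfect block occurs after $\ell_0$, i.e. $N_i(L) \le N_i(\ell_0) < \infty$ for all $L$, which is far stronger than the required $N_i(L) = o(L)$. I expect the only delicate point to be the justification that the running minimum of a positive-drift walk is a.s. finite; this is exactly where the strict inequality $P < \bar{\mathcal{E}}$ is essential, and, in keeping with the SLLN-based style of the BET proof, I would give a short self-contained argument rather than invoke ladder-height theory.

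Finally I would assemble the pieces: $N_i(L)/L \to 0$ turns the lower bound into $\liminf_{L} \overline{R_L} \ge \frac{1}{2}\log(1+P)$, which with the trivial upper bound $\overline{R_L} \le \frac{1}{2}\log(1+P)$ yields $\overline{R_\infty} = \frac{1}{2}\log(1+P)$. The optimality claim then follows as in Lemmas~\ref{Lem:SAT} and~\ref{Lem:BET}: comparing with Proposition~\ref{Prop:Upper-Bound}, whose bound tends to $\frac{1}{2}\log(1+\bar{\mathcal{E}})$ by the SLLN, and taking $P$ arbitrarily close to $\bar{\mathcal{E}}$, shows that APA meets the upper bound asymptotically.
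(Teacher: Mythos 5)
Your proof is correct, but it takes a genuinely different route from the paper's. The paper argues by contradiction on the \emph{count} of imperfect blocks: it supposes the $L^{th}$ block is the $k_0^{th}$ imperfect one with $k_0$ of order $L$, charges each imperfect block at most $P-\epsilon$ for a fixed $\epsilon>0$, and uses the SLLN to force $B_{L+1}>0$, contradicting the battery-emptying condition~\eqref{APA:Battery2}; this yields only that the number of imperfect blocks is sublinear in $L$. You instead recognize the normalized battery evolution as the reflected (Lindley) recursion $b_{\ell+1}=(b_\ell+E_\ell-P)^+$, solve it explicitly as $b_{\ell+1}=S_\ell-\min_{0\le k\le \ell}S_k$, and use the positive drift $\bar{\mathcal{E}}-P>0$ to conclude $b_\ell\to\infty$ almost surely, hence only \emph{finitely many} imperfect blocks occur --- a strictly stronger conclusion than the paper's sublinearity, and one that makes the throughput bound immediate. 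Your route also buys rigor that the paper's argument arguably lacks: the step~\eqref{APA:proof1}, which assumes every imperfect block consumes at most $P-\epsilon$ for some fixed $\epsilon$ independent of $k_0$ and $L$, is not justified by the policy, since an imperfect block's power $Q_\ell=\frac{B_\ell}{n}+E_\ell$ can lie anywhere in $[0,P)$, arbitrarily close to $P$; your argument never needs such a separation. You additionally make explicit the matching upper bound ($Q_\ell\le P$ for all $\ell$, hence $\overline{R_L}\le\frac{1}{2}\log(1+P)$), which the paper leaves implicit. Finally, the one step you flag as delicate --- finiteness of $\inf_{k\ge 0} S_k$ --- is indeed immediate in the self-contained style you propose: since the SLLN gives $S_m\to+\infty$ a.s., each sample path attains its minimum at an a.s.\ finite index, so no ladder-height theory is required.
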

\begin{proof}
First, assume that there exists a $k_0$ beyond which all block codes are sent with power $P$ with property $O(k_0)<O(L)$. Similar to proof of Lemma~\ref{Lem:BET}, \eqref{BET:first_assump} is derived.
Second, assume there exists some $k_0$ such that $L^{th}$ block code is the $k_0^{th}$ block code with transmission power $Q_L < P$, where the order of $k_0$ is $L$ or higher.
In other words, $L - k_0$ blocks before $L^{th}$ block are sent by perfect transmission power $P$ and $k_0-1$ of them as well as $L^{th}$ block are sent by some imperfect power strictly smaller than $P$. According to~\eqref{APA:Battery2}, the battery is required to be empty after block code $L^{th}$, i.e. $B_{L+1} = 0$. On the other hand, if we deduct the maximum consumed energy from the total stored energy during $L$ blocks, we can obtain the following lower bound on $B_{L+1}$.
\begin{eqnarray}
  B_{L+1} &\geq& \sum_{\ell=1}^L nE_\ell - n(L - k_0)P - nk_0 (P-\epsilon)  \label{APA:proof1}\\
   &\geq& nL(\bar{\mathcal{E}} - \delta_L) - nLP + nk_0 \epsilon \label{APA:proof2} \\
   &>& -nL \delta_L + nk_0 \epsilon  \label{APA:proof3}\\
   &\geq& 0    \label{APA:proof4}
\end{eqnarray}
where \eqref{APA:proof1} follows by assuming that all the $k_0$ imperfect blocks are sent by maximum possible power $P - \epsilon$, where $\epsilon>0$ is an arbitrary \emph{fixed} constant independent of $k_0$ and $L$; \eqref{APA:proof2} follows from Strong Law of Large Numbers (SLLN)~\cite{cover}; \eqref{APA:proof3} follows from the fact that $\bar{\mathcal{E}} > P$; \eqref{APA:proof4} holds for any $k_0 \geq \frac{L \delta_L}{\epsilon}$. Order of $L \delta_L$ does not exceed $L$~\cite{cover}, and thus any $k_0$ with order at least $L$ satisfies~\eqref{APA:proof4} as $L\rightarrow \infty$. This result violates~\eqref{APA:Battery2}, and thus such a $k_0$ with $O(k_0)\geq L$ does not exist. The lemma is concluded from the first assumption.
\end{proof}
\begin{theorem}
The SAT over block policy, the BET over block policy, and the APA policy optimally achieves
\begin{equation*}
  \overline{R_\infty} = \frac{1}{2}\log(1+ \overline{\mathcal{E}})
\end{equation*}
for the asymptotic case $L\rightarrow \infty$.
\end{theorem}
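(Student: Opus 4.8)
The plan is to pin down $\overline{R_\infty}$ by a squeeze (sandwich) argument: bound it from above via Proposition~\ref{Prop:Upper-Bound} and from below via the three policy lemmas, then show the two bounds coincide at $\tfrac{1}{2}\log(1+\overline{\mathcal{E}})$. Since the target value equals the upper bound, this simultaneously certifies that each policy is \emph{optimal}, which is the content of the word ``optimally'' in the statement.

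For the lower bound, I would fix any one of the three policies configured with parameter $P=\overline{\mathcal{E}}-\epsilon$ for an arbitrary $\epsilon>0$. Lemma~\ref{Lem:SAT}, Lemma~\ref{Lem:BET}, and Lemma~\ref{Lem:APA} each assert that, for this fixed $P<\overline{\mathcal{E}}$, the policy attains
\[
  \overline{R_\infty} \;=\; \tfrac{1}{2}\log(1+P) \;=\; \tfrac{1}{2}\log(1+\overline{\mathcal{E}}-\epsilon)
\]
as $L\to\infty$. Because $\epsilon>0$ is arbitrary and $x\mapsto\tfrac{1}{2}\log(1+x)$ is continuous and increasing, passing to the supremum over $P\in(0,\overline{\mathcal{E}})$ (equivalently $\epsilon\to 0^{+}$) yields $\overline{R_\infty}\geq\tfrac{1}{2}\log(1+\overline{\mathcal{E}})$ for all three policies.

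For the upper bound, I would invoke Proposition~\ref{Prop:Upper-Bound}, giving $\overline{R_L}\leq\tfrac{1}{2}\log(1+\tfrac{1}{L}\sum_{\ell=1}^{L}E_\ell)$ for every finite $L$ and every realization. By the Strong Law of Large Numbers the empirical average $\tfrac{1}{L}\sum_{\ell=1}^{L}E_\ell$ converges to $\overline{\mathcal{E}}$ almost surely, so continuity of the logarithm gives $\overline{R_\infty}\leq\tfrac{1}{2}\log(1+\overline{\mathcal{E}})$ almost surely, where $\overline{R_\infty}$ is read as the limit superior in~\eqref{throughput-infinit}. Combining the two directions closes the squeeze and proves the equality for the SAT, BET, and APA policies at once.

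The argument is largely an assembly of results already established, so I do not expect a genuinely hard step. The only point requiring care is the ordering of the two limiting operations, and I would emphasise that no interchange is actually needed: each lemma already delivers the $L\to\infty$ limit for a \emph{fixed} $P$, and the subsequent $\epsilon\to 0^{+}$ limit is taken over this family of already-asymptotic identities rather than jointly with $L$. The almost-sure SLLN convergence in the upper bound is likewise consistent with definition~\eqref{throughput-infinit}, so the sandwich is valid on the same probability-one event on which the lemmas hold.
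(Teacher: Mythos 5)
Your proposal is correct and follows essentially the same route as the paper: the paper's proof likewise combines Lemmas~\ref{Lem:SAT}, \ref{Lem:BET}, and \ref{Lem:APA} for fixed $P<\overline{\mathcal{E}}$ and then lets $P\rightarrow\overline{\mathcal{E}}^-$, which is exactly your lower-bound step. The only difference is one of explicitness --- you spell out the matching upper bound via Proposition~\ref{Prop:Upper-Bound} and the SLLN, which the paper leaves implicit in the word ``optimally'' carried by the three lemmas --- so your write-up is a more complete rendering of the same sandwich argument, not a different one.
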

\begin{proof}
The proof follows from Lemma~\ref{Lem:SAT}, Lemma~\ref{Lem:BET}, and Lemma~\ref{Lem:APA}, respectively, because the optimal throughput $\overline{R_\infty}$ is achieved if $P\rightarrow \overline{\mathcal{E}}^-$ (from the left) and $L\rightarrow \infty$.
\end{proof}
\section{Numerical Results} \label{Sec:Numerical-Results}
In this section, we present the numerical results of this research. We have compared six methods in the following figures:  A naive method (based on no power management) which will be explained in the sequel, the three proposed policies of this paper, the Optimal Power Management (OPM) for the corresponding offline model given in~\cite{ulukus, modiano, yener},  and the upper-bound (UB) given in Prop.~\eqref{Prop:Upper-Bound}.\\
\indent Following~\cite{ulukus}, we have assumed an exponential distribution for the recharge rate sequence $\{E_\ell\}_{\ell=1}^L$. In Fig.~\ref{Fig2-Exponential1}, the throughput is sketched for each method versus different mean values $\overline{\mathcal{E}}$. The total number of block codes in Fig.~\ref{Fig2-Exponential1}  is fixed to $L = 500$. In Fig.~\ref{Fig3-Exponential1}, the average throughput is sketched as a function of $L$ in a semi-log plot when the mean of the recharge sequence is fixed to $\overline{\mathcal{E}} = 10$. All the curves of these figures are generated by averaging over $1000$ runs.\\
\indent A naive policy assigns the harvested energy $E_\ell$ to block code $\ell$ entirely and leaves no energy at the end of the block code in the battery. Specifically, the native method assigns
\begin{eqnarray*}
    Q_\ell &=&  E_\ell\\
    R_\ell &=& \frac{1}{2}\log(1+ E_\ell)
\end{eqnarray*}
which leads to $\overline{R_L} = \frac{1}{2}\sum_{\ell = 1}^L \log(1+ E_\ell)$. From Fig.~\ref{Fig3-Exponential1},  the corresponding $\overline{R_L}$ is a constant function of $L$, which is the statistical average of $R_\ell$ because $\{R_\ell\}_{\ell = 1}^L$ is an i.i.d. sequence is this case. When $L$ is not large enough, $L = 50$ for example, the naive method outperforms the SAT across blocks according to Fig.~\ref{Fig3-Exponential1}. Because the save phase in the SAT method takes many time frames to charge up the battery for the transmission phase, and thus a considerable portion of the block codes $(\frac{\hbar(L)}{L})$ remains silent. However, for large enough block codes, $L = 500$ for example, the SAT across blocks outperforms the naive method according to Fig.~\ref{Fig2-Exponential1}, because the save phase contains negligible portion of the whole block codes.\\
\indent Due to Fig.~\ref{Fig2-Exponential1}, and Fig.~\ref{Fig3-Exponential1}, the BET over blocks policy outperforms the SAT over blocks policy, and the APA policy outperforms the BET over blocks policy in general.
From Fig.~\ref{Fig3-Exponential1}, the three proposed online policies and the OPM offline asymptotically converges to the UB, and thus all achieves the optimal  throughput.
\begin{figure}[h]
  \centering
  \includegraphics[width=9cm]{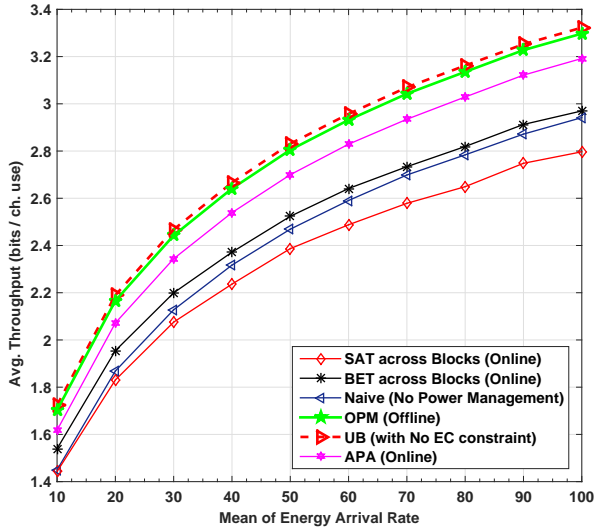}
  \caption{Comparison the performance of different methods as a function of the mean of $P_E$ (exponential distribution) for $L = 50$.}\label{Fig1-Exponential1}
\end{figure}
\begin{figure}[h]
  \centering
  \includegraphics[width=9cm]{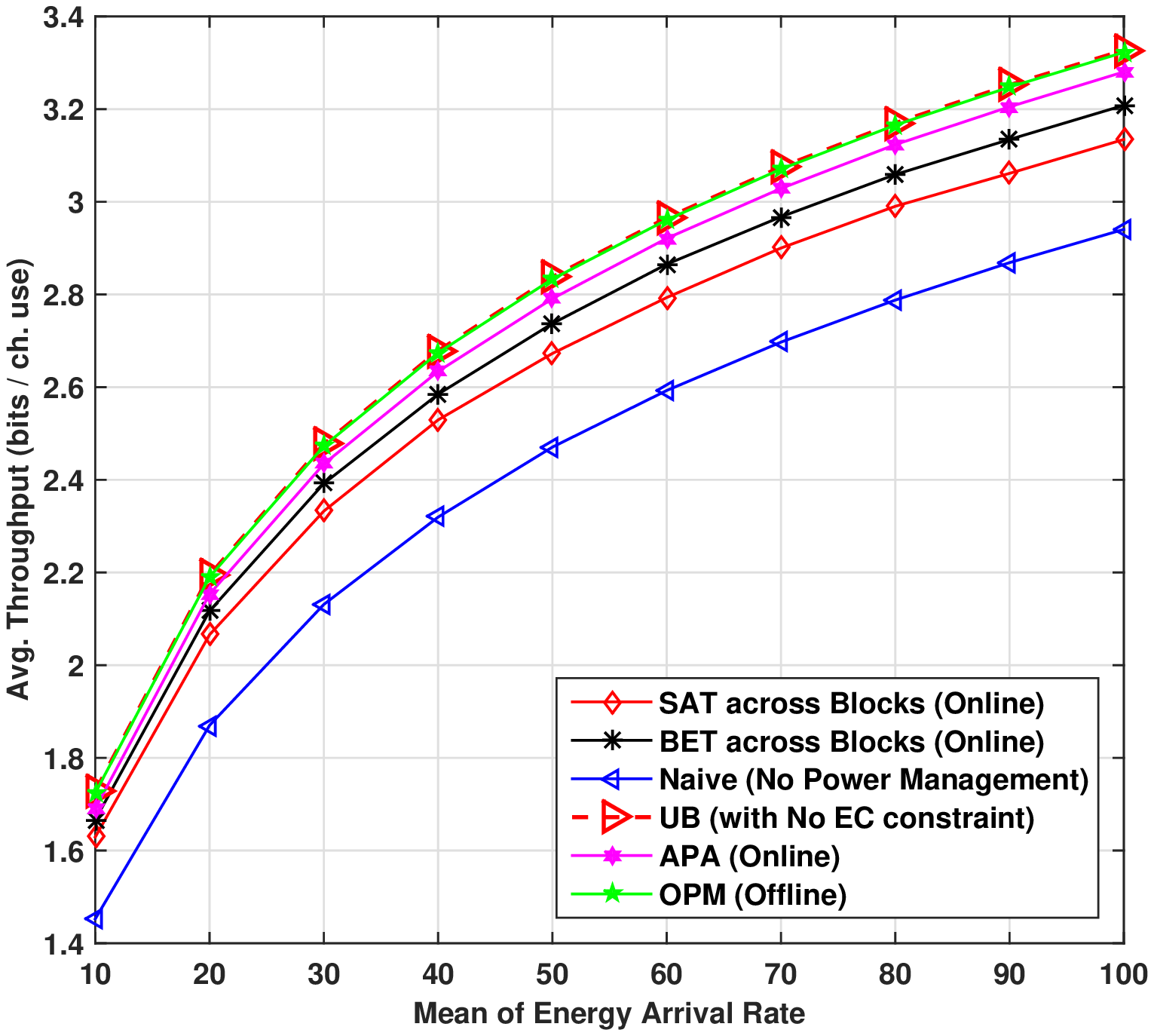}
  \caption{Comparison of the performance of different methods as a function of the mean of $P_E$ (exponential distribution) for $L = 500$.}\label{Fig2-Exponential1}
\end{figure}

\begin{figure}[h]
  \centering
  \includegraphics[width=9cm]{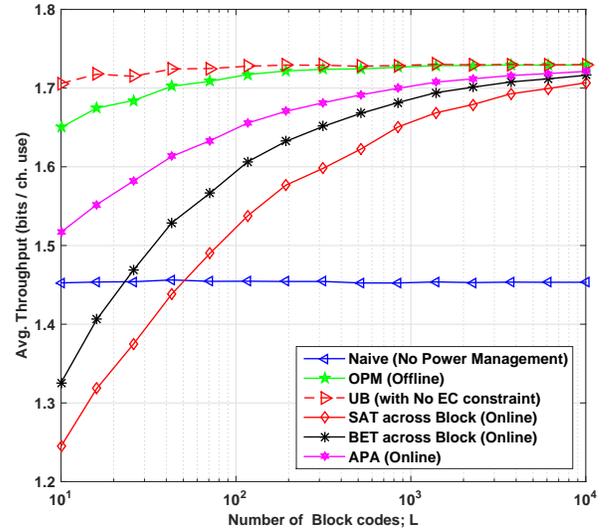}
  \caption{Comparison of the performance of different methods as a function of the number of blocks $(L)$ when the mean of $P_E$ (exponential distribution) is fixed to $10$.}\label{Fig3-Exponential1}
\end{figure}

\section{Conclusions}  \label{Sec:Conclusion}
In this paper, we have developed three novel schemes for the EH problem over an AWGN channel with slow-varying harvested energy. The objective is to acquire optimal power assignment (policy) across blocks to achieve the maximum throughput. Three novel online policies are developed in this paper. The schemes are simple in complexity such that the allocation power of each block code can be uniquely determined from the current energy arrival rate and the battery state. The schemes are asymptotically optimal as the number of block codes grows. The schemes can be exploited as simple efficient offline policies as well.
\bibliographystyle{IEEEtran}
\bibliography{EH}

\begin{thebibliography}{10}
\providecommand{\url}[1]{#1}
\csname url@samestyle\endcsname
\providecommand{\newblock}{\relax}
\providecommand{\bibinfo}[2]{#2}
\providecommand{\BIBentrySTDinterwordspacing}{\spaceskip=0pt\relax}
\providecommand{\BIBentryALTinterwordstretchfactor}{4}
\providecommand{\BIBentryALTinterwordspacing}{\spaceskip=\fontdimen2\font plus
\BIBentryALTinterwordstretchfactor\fontdimen3\font minus
  \fontdimen4\font\relax}
\providecommand{\BIBforeignlanguage}[2]{{%
\expandafter\ifx\csname l@#1\endcsname\relax
\typeout{** WARNING: IEEEtran.bst: No hyphenation pattern has been}%
\typeout{** loaded for the language `#1'. Using the pattern for}%
\typeout{** the default language instead.}%
\else
\language=\csname l@#1\endcsname
\fi
#2}}
\providecommand{\BIBdecl}{\relax}
\BIBdecl

\bibitem{statistica}
The-Statistics-Portal, ``Iot number of connected devices worldwide,''
  \url{https://www.statista.com/statistics/471264}, November 2016.

\bibitem{ulukus}
O.~Ozel and S.~Ulukus, ``Achieving {AWGN} capacity under stochastic energy
  harvesting,'' \emph{IEEE Transactions on Information Theory}, vol.~58,
  no.~10, pp. 6471--6483, 2012.

\bibitem{cover}
T.~M. Cover and J.~A. Thomas, \emph{Elements of information theory}.\hskip 1em
  plus 0.5em minus 0.4em\relax John Wiley \& Sons, 2012.

\bibitem{yener}
O.~Ozel, K.~Tutuncuoglu, J.~Yang, S.~Ulukus, and A.~Yener, ``Transmission with
  energy harvesting nodes in fading wireless channels: Optimal policies,''
  \emph{IEEE Journal on Selected Areas in Communications}, vol.~29, no.~8, pp.
  1732--1743, 2011.

\bibitem{zhang12}
C.~K. Ho and R.~Zhang, ``Optimal energy allocation for wireless communications
  with energy harvesting constraints,'' \emph{IEEE Transactions on Signal
  Processing}, vol.~60, no.~9, pp. 4808--4818, 2012.

\bibitem{zhang14}
C.~Huang, R.~Zhang, and S.~Cui, ``Optimal power allocation for outage
  probability minimization in fading channels with energy harvesting
  constraints,'' \emph{IEEE Transactions on Wireless Communications}, vol.~13,
  no.~2, pp. 1074--1087, 2014.

\bibitem{review15}
S.~Ulukus, A.~Yener, E.~Erkip, O.~Simeone, M.~Zorzi, P.~Grover, and K.~Huang,
  ``Energy harvesting wireless communications: A review of recent advances,''
  \emph{IEEE Journal on Selected Areas in Communications}, vol.~33, no.~3, pp.
  360--381, 2015.

\bibitem{ozgur15}
Y.~Dong, F.~Farnia, and A.~{\"O}zg{\"u}r, ``Near optimal energy control and
  approximate capacity of energy harvesting communication,'' \emph{IEEE Journal
  on Selected Areas in Communications}, vol.~33, no.~3, pp. 540--557, 2015.

\bibitem{modiano}
M.~A. Zafer and E.~Modiano, ``A calculus approach to energy-efficient data
  transmission with quality-of-service constraints,'' \emph{IEEE/ACM
  Transactions on Networking (TON)}, vol.~17, no.~3, pp. 898--911, 2009.

\bibitem{sinha12}
A.~Sinha, ``Optimal power allocation for a renewable energy source,'' in
  \emph{Communications (NCC), 2012 National Conference on}.\hskip 1em plus
  0.5em minus 0.4em\relax IEEE, 2012, pp. 1--5.

\bibitem{wang13}
Q.~Wang and M.~Liu, ``When simplicity meets optimality: Efficient transmission
  power control with stochastic energy harvesting,'' in \emph{INFOCOM, 2013
  Proceedings IEEE}, 2013, pp. 580--584.

\bibitem{ozgur16}
D.~Shaviv and A.~{\"O}zg{\"u}r, ``Universally near optimal online power control
  for energy harvesting nodes,'' \emph{IEEE Journal on Selected Areas in
  Communications}, vol.~34, no.~12, pp. 3620--3631, 2016.

\bibitem{ozgur17-conf}
------, ``Online power control for block i.i.d. bernoulli energy harvesting
  channels,'' in \emph{Wireless Communications and Networking Conference
  Workshops (WCNCW)}.\hskip 1em plus 0.5em minus 0.4em\relax IEEE, 2017, pp.
  1--6.

\bibitem{Ulukus2017-ISIT}
A.~Arafa, A.~Baknina, and S.~Ulukus, ``Energy harvesting networks with general
  utility functions: Near optimal online policies,'' in \emph{Information
  Theory (ISIT), 2017 IEEE International Symposium on}.\hskip 1em plus 0.5em
  minus 0.4em\relax IEEE, 2017, pp. 809--813.

\bibitem{zib-IWCIT}
A.~Zibaeenejad and P.~Parhizgar, ``Power management policies for slowly varying
  {B}ernoulli energy harvesting channels,'' in \emph{to be presented at Iran
  Workshop on Communication and Information Theory (IWCIT)}.\hskip 1em plus
  0.5em minus 0.4em\relax Sharif University of Technology, Tehran, Iran, April
  2018.

\end{thebibliography}
\end{document}